\newtheorem{Thm}{Theorem}
\newtheorem{Prop}{Proposition}
\newtheorem{Lem}{Lemma}
\newtheorem{Rem}{Remark}
\theoremstyle{definition}
\newcommand{\R}{\mathop{\mathbb{R}}\nolimits}
\newcommand{\N}{\mathop{\mathbb{N}}\nolimits}
\newcommand{\tr}{\mathop{\mathrm{Tr}}\nolimits}
\newcommand{\I}{\mathop{\mathbb{I}}\nolimits}
\DeclareMathOperator{\Tr}{Tr}
\newcommand{\ketbra}[2]{| #1 \rangle \langle #2 |}
\newcommand{\norm}[1]{|| #1 ||}  \newcommand{\cnorm}[1]{\left\|#1 \right\|}
\newcommand{\bignorm}[1]{\big\|#1\big\|}
\def\hil{\mathcal{H}}
\def\tc{\mathcal{T}}
\def\ki{\textit}
\def\bz{\left(}
\def\jz{\right)}
\def\d{d}
\def\dom{\mathcal{D}}
\def\ran{\mathcal{R}}
\def\U{\mathcal{U}}
\def\imp{\Longrightarrow}
\newcommand{\s}{\mbox{ }}
\newcommand{\ds}{\mbox{ }\mbox{ }}
\newcommand{\diad}[2]{|#1\rangle\langle #2|}
\newcommand{\proj}[1]{\diad{#1}{#1}}
\newcommand{\comm}[2]{[#1,#2]}
\newcommand{\inner}[2]{\langle #1 , #2\rangle}
\newcommand{\abs}[1]{\left|#1\right|}
\begin{document}

\title{Relation between the Dynamics of the Reduced Purity and Correlations}
\author{Gen Kimura $^a$}
\email{gen-kimura[at mark]aist.go.jp}
\author{Hiromichi Ohno $^b$}
\email{h_ohno[at mark]shinshu-u.ac.jp}
\author{Mil\'an  Mosonyi $^c{}^{,d}$}
\email{milan.mosonyi[at mark]gmail.com}
\affiliation{$^a$ Research Center for Information Security (RCIS), National Institute of Advanced Industrial Science and Technology (AIST). Daibiru building 1102, 1-18-13 Sotokanda, Chiyoda-ku, Tokyo, 101-0021, Japan.}
\affiliation{$^b$ Department of Mathematics, Faculty of Engineering, Shinshu University,
4-17-1 Wakasato, Nagano, 380-8553, Japan}
\affiliation{$^c$Mathematical Institute, Budapest University of Technology and Economics, Egry J$\acute{o}$zsef u 1., Budapest, 1111 Hungary}
\affiliation{$^d$ Centre for Quantum Technologies, National University of Singapore, Block S15, 3 Science Drive 2, Singapore 117543.}

\begin{abstract}
A general property of the relation between the dynamics of the reduced purity and correlations is investigated in quantum mechanical systems. 
We show that a non-zero time-derivative of the reduced purity of a system implies the existence of non-zero correlations with its environment under any unbounded Hamiltonians with finite variance.  
This shows the role of local dynamical information on the correlations, as well as the role of correlations in the mechanism of purity change. 
\end{abstract}

\maketitle

\section{Introduction}

In the theory of open quantum systems \cite{ref:OQS,ref:BP,ref:Davies}, an interaction between a system $S$ and its environment is essential to have a non-unitary time-evolution of the system. 
Indeed, without interaction, the system $S$ evolves unitarily even under the existence of its environment. 
In general, an interaction generates correlations between the system and its environment, from which the mechanism of the purity change of the system is explained. 
This follows from the general property of correlations: {\it the state of a system is mixed if there are correlations between the system and its environment}. 
Notice that the contraposition of this statement tells that (A) {\it if the system is in a pure state then it has no correlations with any other system} \footnote{See, for instance, \cite{ref:D} in which the physical importance of Statement (A) is clarified. Indeed, this is quite generic property of correlations, and one can show this in any general probabilistic theories: See, for instance [M. Takesaki, \textit{Theory of Operator Algebra I} (Springer, 1979)] and [H. Barnum, J. Barrett, M. Leifer, A. Wilce, Phys. Rev. Lett. 99, 240501 (2007); arXiv:0805.3553.] }. 
This simple but quite general property of correlations is as important as anything in the context of safe key distribution, such as quantum cryptography. 
In order to achieve a safe key distribution, the legitimate users, Alice and Bob, should prepare a pure (entangled) state of their system $S = A+B$. 
If an eavesdropper, say Eve, wants to get information on the system $S$, she can make her system $E$ interact with $S$. 
However, in order to get information on $S$, she has to create correlation between her system $E$ and $S$, and this would inevitably change the purity of the system $S$.  
Thus, what Alice and Bob should do is just to confirm that their system $S$ is indeed in a pure state, and this can be done in principle, provided that they have enough copies of their state (i.e., i.i.d. states). 

In essence, what we learn from Statement (A) is that a local information (the purity of $S$) can yield global information (correlations between $S$ and $E$). 
It is interesting to consider the following general problem: How much local information of a subsystem can have the global information of correlations? 
A short consideration, however, reveals that Statement (A) is all what one can learn about correlations from the local information. 
Indeed, if the (reduced) state is mixed $\rho_S$, then there could be both cases of zero correlations and non-zero correlations. 

However, it should be noticed that we used only a static property of the reduced state. 
Since we have i.i.d. states, we can measure local observables as much as possible, especially at each time where the reduced state evolves as time passes. 
One can expect to get some information of correlations using the local information of {\it the dynamics} of the subsystem.  
In \cite{ref:KOH}, we have shown the following: 
(B) {\it If the time evolution of the purity of a system has a non-zero time-derivative, then there are correlations between the system and its environment}. 
This seems natural but not so trivial: Notice that we know, from the role of an interaction, that a non-zero time-derivative of the reduced purity surely implies the existence of an interaction, but not directly the existence of correlations. 
To see the non-triviality of this, consider the contraposition of (B), which implies that, even if there is a strong interaction between $S$ and $E$, the purity of $S$ does not change at the moment of the time where there are no correlations. 
In this sense, one can say that correlations play an essential role in the change of purity.  

In order to see the Statement (B), we need to put some reasonable assumptions on the dynamics. 
In \cite{ref:KOH}, we (naturally) assumed that the total quantum system $S+E$ is isolated, and thus evolves under the Schr\"odinger equation with a total Hamiltonian. 
For technical simplicity, however, we have also assumed that the total Hamiltonian is bounded. 
In order to prove Statement (B) in full generality, we need to study the case where the total Hamiltonian is unbounded. 
In this note, we prove Statement (B) for unbounded Hamiltonians, provided that the total state has finite variance of the energy (total Hamiltonian). 
The finiteness of the variance is essential, since otherwise Statement (B) can fail in general (see \cite{ref:KOH} for a counterexample). 

As the von Neumann-Schr\"odinger equation plays an important role in our analysis, we give a brief review on its validity in section \ref{sec:DE}, where we show that finite variance of the energy in some state is a sufficient condition for the von Neumann-Schr\"odinger equation to hold.
In section \ref{sec:DofRP}, we prove our main results, including the above given statement and an upper bound on the time-derivative of the reduced purity in terms of the second moment of the total energy. 
After some discussions on the physical meaning of these results, we conclude the paper in section \ref{sec:C}. 

\section{Differentiable evolution}\label{sec:DE}

Consider a quantum mechanical system with separable Hilbert space $\hil$ and a Hamilton operator $H$. 
In the following, we denote the inner product between $\psi,\phi \in \hil$ by $\inner{\psi}{\phi}$, and adopt the convention that the inner product is linear in its second argument and anti-linear
in the first. 
The norm of a vector $\psi \in \hil$ is given by $||\psi||:= \inner{\psi}{\psi}^{1/2}$, and the norm of a bounded linear operator $A$ is defined as $||A|| := \sup\{||A\psi|| : \psi \in \hil, ||\psi || \le 1\}$.

In general, $H$ is an unbounded self-adjoint operator on $\hil$. 
States of the system are described by density operators, i.e., positive semidefinite trace-class operators on $\hil$ with unit trace.
If the system is isolated then its time evolution is determined by the unitary group $U(t):=e^{-itH},\s t\in\R$, generated by $H$.
More precisely, if the initial state of the system at time $t=0$ is $\rho$ then the state $\rho(t)$ of the system at time $t$ is given by 
\begin{equation}\label{time evolution}
\rho(t)=U(t)\rho U(t)^*=:\U_t\rho\,.
\end{equation}
One can easily verify that if $H$ is bounded then the evolution $t\mapsto\rho(t)$ is differentiable at any time $t$, and $\frac{d}{dt}\rho(t)=-i\comm{H}{\rho(t)}:=-i\bz H\rho(t)-\rho(t)H\jz$. 
However, when $H$ is unbounded, there exist initial states for which the evolution is not differentiable. 
Note that the space $\tc(\hil)$ of trace-class operators is a Banach space with the trace norm
$\norm{a}_1:=\Tr |a|$, and if $f:\,\R\to\tc(\hil)$ is a function then $\frac{d}{dt}\big|_{t=t_0}f(t)=a$ means $\lim_{t\to t_0}\bignorm{\frac{f(t)-f(t_0)}{t-t_0}-a}_1=0$. It is easy to see that $\lim_{t\to t_0}\norm{\U_t\rho-\U_{t_0}\rho}_1=0$ for any state $\rho$ and $t_0\in\R$, and $\U_{t+s}=\U_t\U_s,\,t,s\in\R$, i.e., $\U$ defines a strongly continuous group on $\tc(\hil)$. As a consequence, if the evolution is differentiable at some $t=t_0$ then it is differentiable at any $t\in\R$.
The following was given in \cite[Lemma 5.1]{ref:Davies}:
\begin{Thm}\label{Davies}
The map $t\mapsto\rho(t)$ is differentiable if and only if (i) $\rho$ leaves the domain of $H$ invariant, and (ii) $\comm{H}{\rho}$ is a densely defined closable operator such that its closure $\overline{\comm{H}{\rho}}$ is in $\tc(\hil)$.
Moreover, in this case
\begin{equation}\label{vNS}
\frac{d}{dt}\rho(t)=-i\overline{\comm{H}{\rho(t)}}
= -i U(t)\overline{\comm{H}{\rho}}U(t)^* \,,\ds\ds\ds t\in\R.
\end{equation}
\end{Thm}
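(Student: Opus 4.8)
The plan is to reduce everything to the behaviour at $t=0$ and then treat the two implications separately, using Stone's theorem to identify $-iH$ as the generator of $U(\cdot)$ on $\hil$. Since the excerpt already records that $\U$ is a strongly continuous group on $\tc(\hil)$ and that differentiability at one instant forces it at every instant, it suffices to analyse $\lim_{t\to0}\norm{\frac{\U_t\rho-\rho}{t}-a}_1$ and to identify the candidate limit.

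For the sufficiency direction ($\Leftarrow$), I would assume (i) and (ii) and set $b:=-i\overline{\comm{H}{\rho}}\in\tc(\hil)$. First I would verify the desired identity at the level of matrix elements: for $\phi,\psi\in\dom(H)$ the vectors $U(s)^*\phi,U(s)^*\psi$ stay in $\dom(H)$ and are norm-differentiable, so differentiating $\inner{U(s)^*\phi}{\rho U(s)^*\psi}$ and using self-adjointness of $H$ together with (i) (to move $H$ across $\rho$) yields $\frac{d}{ds}\inner{\phi}{\rho(s)\psi}=\inner{\phi}{\U_s b\,\psi}$. Integrating in $s$ and invoking density of $\dom(H)$ upgrades this to the Banach-space identity $\rho(t)-\rho=\int_0^t\U_s b\,ds$ in $\tc(\hil)$. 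Because $s\mapsto\U_s b$ is trace-norm continuous (strong continuity of $\U$), the fundamental theorem of calculus in $\tc(\hil)$ then gives differentiability at every $t$ with $\frac{d}{dt}\rho(t)=\U_t b=-iU(t)\overline{\comm{H}{\rho}}U(t)^*$.

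For the necessity direction ($\Rightarrow$), suppose the evolution is differentiable at $t=0$ with derivative $a\in\tc(\hil)$. Trace-norm convergence of the difference quotient forces operator-norm, hence strong, convergence, so $\frac{\U_t\rho-\rho}{t}\psi\to a\psi$ for every $\psi$. The key algebraic step is the splitting
\[
\frac{\U_t\rho-\rho}{t}\psi=U(t)\rho\,\frac{U(t)^*-I}{t}\psi+\frac{U(t)-I}{t}\rho\psi .
\]
For $\psi\in\dom(H)$ the first term converges to $i\rho H\psi$ by Stone's theorem and strong continuity of $U$, so the second term must converge as well. Here is the crux: by Stone's theorem the limit of $\frac{U(t)-I}{t}\chi$ exists exactly when $\chi\in\dom(H)$, so convergence of the second term shows $\rho\psi\in\dom(H)$ (which is (i)) and that the limit equals $-iH\rho\psi$. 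Comparing the two limits gives $a\psi=-i\comm{H}{\rho}\psi$ on $\dom(H)$; since $a$ is bounded and $\dom(H)$ is dense, $\comm{H}{\rho}$ is closable with $\overline{\comm{H}{\rho}}=ia\in\tc(\hil)$, which is (ii). Finally, the alternative expression $-i\,\overline{\comm{H}{\rho(t)}}$ follows because conjugation by $U(t)$ preserves (i)--(ii) and commutes with $H$ on $\dom(H)$, whence $\overline{\comm{H}{\rho(t)}}=U(t)\overline{\comm{H}{\rho}}U(t)^*$.

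The main obstacle I anticipate is the passage between the weak (matrix-element) and strong (trace-norm) formulations: establishing $\rho(t)-\rho=\int_0^t\U_s b\,ds$ rigorously as a Bochner integral in $\tc(\hil)$ and justifying its differentiation is where the unboundedness of $H$ makes the bookkeeping delicate, since one must keep careful track of which vectors lie in $\dom(H)$. Everything else reduces to Stone's theorem and routine manipulations with the strongly continuous group $\U$.
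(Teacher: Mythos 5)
The paper does not actually prove this theorem: it is imported verbatim from Davies' book (cited as Lemma 5.1 of \cite{ref:Davies}), so there is no in-paper argument to compare against. Your proof is correct and is essentially the standard argument underlying the cited result: sufficiency via the matrix-element computation on $\dom(H)\times\dom(H)$, the Bochner integral identity $\rho(t)-\rho=\int_0^t\U_s b\,ds$ in $\tc(\hil)$, and the fundamental theorem of calculus; necessity via the splitting $U(t)\rho U(t)^*\psi-\rho\psi=U(t)\rho\,(U(t)^*-I)\psi+(U(t)-I)\rho\psi$ together with the fact that $\|\cdot\|\le\|\cdot\|_1$ upgrades trace-norm convergence of the difference quotient to strong convergence, and Stone's characterization of $\dom(H)$ as exactly the set where $\lim_{t\to0}t^{-1}(U(t)-I)\chi$ exists. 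The two points you flag as delicate are indeed the only ones requiring care, and both are handled correctly: the passage from matrix elements to the trace-class identity works because $T\mapsto\inner{\phi}{T\psi}$ is trace-norm continuous and $\dom(H)$ is dense, and the closability claim in (ii) follows because a densely defined restriction of a bounded operator has that bounded operator as its closure.
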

\noindent We refer to \eqref{vNS} as the \ki{von Neumann-Schr\"odinger equation}. 

Though the above theorem gives a complete mathematical characterization of differentiability, its condition doesn't seem to have a direct physical interpretation.
Below we show a more physical sufficient condition, namely that the evolution is differentiable whenever the variance of the energy is finite in the initial state of the system.

Let $E^H(B)$ denote the spectral projection of $H$, corresponding to some Borel set $B\subset\R$. For any state $\rho$, the map $B\mapsto\Tr \bz E^H(B)\rho\jz$ defines a probability measure on $\R$, and 
the
$k$th \ki{moment} of the Hamiltonian in the state $\rho$ is defined as
\begin{equation*}
m_{k,\rho}(H):=\int_{\R}\lambda^k\,\d\Tr \bz E^H(\lambda)\rho \jz,
\end{equation*}
whenever the integral exists. Finiteness of some moment implies the finiteness of all lower moments (due to H\"older's inequality). In particular, if the second moment is finite then the \ki{expectation} and the \ki{variance} of $H$ with respect to $\rho$,
\begin{equation*}
E[H]_{\rho}:=\int_{\R}\lambda\,\d\Tr \bz E^H(\lambda)\rho \jz\ds\text{and}\ds
V[H]_{\rho}:=\int_{\R}\bz\lambda-E[H]_{\rho}\jz^2\,\d\Tr \bz E^H(\lambda)\rho \jz
\end{equation*}
are also finite. Note that the domain $\dom(H)$ of $H$ consists of those vectors $\psi$ for which the second moment is finite with respect to $\proj{\psi}$, and $\norm{H\psi}^2=m_{2,\proj{\psi}}(H), \ \psi \in \dom(H)$.
We will use the short-hand notation $V[H]_{\rho}<+\infty$ to indicate that the variance exists and is finite (which is easily seen to be equivalent to the finiteness of the second moment). 
Note that since $U(t)$ commutes with the spectral projections $E^H(B)$ for any Borel set $B \subset \R$ and any $t \in \R$, one easily obtains that  $V[H]_{\rho}=V[H]_{\rho(t)}$ and $m_{k,\rho}(H) = m_{k,\rho(t)}(H)$ for any $k \in \N, t \in \R$.

\begin{Prop}\label{prop:variance}
If the second moment of $H$ in the state $\rho$ is finite then the evolution is differentiable and the von Neumann-Schr\"odinger equation \eqref{vNS} holds.
\end{Prop}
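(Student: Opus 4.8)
The plan is to verify, directly from the finiteness of the second moment, the two conditions in Theorem~\ref{Davies}, and then to invoke that theorem. I would start from the spectral decomposition $\rho=\sum_n p_n\proj{\psi_n}$, with $p_n>0$, $\sum_n p_n=1$, and $\{\psi_n\}$ an orthonormal system. Using the identity $\norm{H\psi}^2=m_{2,\proj{\psi}}(H)$ recorded above, the hypothesis translates into $\sum_n p_n\norm{H\psi_n}^2=m_{2,\rho}(H)<+\infty$; in particular every $\psi_n$ lies in $\dom(H)$.

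The first step is to show that $\rho$ maps all of $\hil$ into $\dom(H)$, which yields condition (i). For arbitrary $\phi\in\hil$ one has $\rho\phi=\sum_n p_n\inner{\psi_n}{\phi}\psi_n$, and I would bound the tail of $\sum_n p_n\inner{\psi_n}{\phi}H\psi_n$ by applying the Cauchy--Schwarz inequality to $\sum_n p_n\abs{\inner{\psi_n}{\phi}}\norm{H\psi_n}\le\big(\sum_n p_n\abs{\inner{\psi_n}{\phi}}^2\big)^{1/2}\big(\sum_n p_n\norm{H\psi_n}^2\big)^{1/2}$. Since $p_n\le 1$, Bessel's inequality bounds the first factor by $\norm{\phi}$, and the second is $\sqrt{m_{2,\rho}(H)}$, so this series converges absolutely. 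As $H$ is closed, this forces $\rho\phi\in\dom(H)$ with $H\rho\phi=\sum_n p_n\inner{\psi_n}{\phi}H\psi_n$, i.e.\ $H\rho$ coincides with the everywhere-defined operator $A:=\sum_n p_n\ketbra{H\psi_n}{\psi_n}$.

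The second step is to identify the closure of $\comm{H}{\rho}$ and show it is trace class, establishing condition (ii). Running the same Cauchy--Schwarz estimate with $\abs{\inner{\psi_n}{\phi}}$ replaced by $\norm{\psi_n}=1$ gives $\norm{A}_1\le\sum_n p_n\norm{H\psi_n}\le\sqrt{m_{2,\rho}(H)}$, so $A=H\rho$ is trace class; its adjoint $A^*=\sum_n p_n\ketbra{\psi_n}{H\psi_n}$ is then trace class as well, and self-adjointness of $H$ shows that $A^*$ agrees with $\rho H$ on $\dom(H)$. Hence on the dense domain $\dom(H)$ the commutator $\comm{H}{\rho}$ coincides with the bounded operator $A-A^*$, so it is closable with $\overline{\comm{H}{\rho}}=A-A^*\in\tc(\hil)$. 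With (i) and (ii) in hand, Theorem~\ref{Davies} delivers both differentiability and the von Neumann--Schr\"odinger equation \eqref{vNS}.

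The step I expect to demand the most care is the operator-theoretic bookkeeping forced by the unboundedness of $H$: one must justify pulling $H$ through the infinite sum defining $\rho\phi$, which is exactly where closedness of $H$ combines with the absolute convergence above, and one must check that the formal rank-one expansions of $H\rho$ and $\rho H$ genuinely represent the operator compositions on their natural domains rather than serving as purely formal symbols. Once these domain issues are settled, the trace-norm bounds reduce to routine Cauchy--Schwarz estimates against the finite quantity $m_{2,\rho}(H)$.
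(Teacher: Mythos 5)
Your proposal is correct and follows essentially the same route as the paper: verify conditions (i) and (ii) of Theorem~\ref{Davies} via the eigen-decomposition of $\rho$, the Cauchy--Schwarz estimate $\sum_n p_n\norm{H\psi_n}\le m_{2,\rho}(H)^{1/2}$, and closedness of $H$ to identify $H\rho$ and $\overline{\rho H}$ with trace-class rank-one expansions (the paper packages this last part as Lemma~\ref{lemma:vNS}). The only cosmetic difference is how $\ran(\rho)\subset\dom(H)$ is obtained --- the paper dominates the spectral measure of $\proj{\rho^{1/2}\psi}$ by that of $\rho$, while you extract it directly from the absolutely convergent series for $H\rho\phi$ --- and both arguments are sound.
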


\begin{Rem}
In this proposition, we can not replace the finiteness of the second moment with that of the first moment.
For example, let 
\[
\rho = \bigoplus_{n=1}^\infty
{1\over 2^n}\left(
\begin{array}{cc}
1 & 0 \\
0 & 0
\end{array} \right), \qquad
H = \bigoplus_{n=1}^\infty
 2^n \left(
\begin{array}{cc}
{1\over n^{2}} & {1\over n} \sqrt{1-{1\over n^2}} \\
{1\over n} \sqrt{1-{1\over n^2}} & 1-{1\over n^2}
\end{array} \right).
\]
Then the first moment of $H$ in the state $\rho$ is 
finite but $\overline{[H,\rho]}$ is not a trace-class operator.
\end{Rem}

The proof of Proposition \ref{prop:variance} is based on the following:
\begin{Lem}\label{lemma:vNS}
Assume that the range $\ran(\rho)$ of $\rho$ is contained in the domain of $H$ and 
\begin{equation*}
\sum_k p_k\norm{He_k}<+\infty\,,
\end{equation*}
where $\rho=\sum_k p_k\proj{e_k}$ is an eigen-decomposition of $\rho$, with $p_k>0,\,\sum_k p_k=1$. 
Then,
\begin{equation*}
H\rho=\sum_k p_k\diad{He_k}{e_k}\,,\ds\ds\ds 
\overline{\rho H}=\sum_k p_k\diad{e_k}{He_k}=(H\rho)^*\,,
\end{equation*}
where the sums converge in trace-norm, so that both $H \rho$ and $\overline{\rho H}$ are trace-class, and $[H,\rho]$ is closable with 
$\overline{[H,\rho]} = \sum_k p_k \ketbra{H e_k}{e_k} - \ketbra{e_k}{H e_k}$. 
Moreover, the state evolution satisfies the von Neumann-Schr\"odinger equation as
\begin{equation*}
\frac{d}{dt}\rho(t)=-i\sum_k p_kU(t)\bz \diad{He_k}{e_k}-\diad{e_k}{He_k}\jz U(t)^*\,.
\end{equation*}
\end{Lem}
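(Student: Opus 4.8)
The plan is to realise the two operators $H\rho$ and $\overline{\rho H}$ as trace-norm-convergent series, to identify them by invoking the closedness of the self-adjoint $H$, and then to verify the two hypotheses of Theorem~\ref{Davies}. To begin, set $A:=\sum_k p_k\diad{He_k}{e_k}$. Since $\norm{\diad{He_k}{e_k}}_1=\norm{He_k}\norm{e_k}=\norm{He_k}$, the assumption $\sum_k p_k\norm{He_k}<+\infty$ makes the partial sums Cauchy in trace norm; as $\tc(\hil)$ is complete, $A$ is a well-defined trace-class operator. Taking adjoints term by term, $B:=A^*=\sum_k p_k\diad{e_k}{He_k}$ is trace-class as well, with the identical bound. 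These are the candidates for $H\rho$ and $\overline{\rho H}$.

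The crux is to show $H\rho=A$, and this is the step I expect to demand the most care, since it is where both hypotheses are actually consumed. Fix $\psi\in\hil$ and write $\rho\psi=\sum_k p_k\inner{e_k}{\psi}e_k$; the finite partial sums $S_n:=\sum_{k\le n}p_k\inner{e_k}{\psi}e_k$ lie in $\dom(H)$ (each $e_k=\rho(p_k^{-1}e_k)\in\ran(\rho)\subseteq\dom(H)$) and converge to $\rho\psi$. Meanwhile $HS_n=\sum_{k\le n}p_k\inner{e_k}{\psi}He_k$ is Cauchy, because $\norm{HS_n-HS_m}\le\norm{\psi}\sum_{m<k\le n}p_k\norm{He_k}\to0$ using $\abs{\inner{e_k}{\psi}}\le\norm{\psi}$. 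Since $H$ is closed, these two limits force $\rho\psi\in\dom(H)$ and $H\rho\psi=\sum_k p_k\inner{e_k}{\psi}He_k=A\psi$; thus $H\rho=A$. That $\ran(\rho)\subseteq\dom(H)$, i.e.\ condition (i) of Theorem~\ref{Davies}, is already part of the hypothesis.

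For the second identity, observe that $\rho H$ is defined on $\dom(H)$, where self-adjointness of $H$ yields $\inner{e_k}{H\psi}=\inner{He_k}{\psi}$ and hence $\rho H\psi=\sum_k p_k\inner{He_k}{\psi}e_k=B\psi$ for every $\psi\in\dom(H)$. As $B$ is bounded and $\dom(H)$ is dense, the closure of $\rho H$ is $B=(H\rho)^*$. Consequently $\comm{H}{\rho}$, defined on the dense domain $\dom(H)$, agrees there with the bounded operator $A-B$, so it is closable with $\overline{\comm{H}{\rho}}=A-B=\sum_k p_k\bz\diad{He_k}{e_k}-\diad{e_k}{He_k}\jz\in\tc(\hil)$, which is condition (ii) of Theorem~\ref{Davies}. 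With both hypotheses verified, Theorem~\ref{Davies} applies and gives $\frac{d}{dt}\rho(t)=-iU(t)\overline{\comm{H}{\rho}}U(t)^*$; substituting the expression just found for $\overline{\comm{H}{\rho}}$ produces exactly the stated von Neumann--Schr\"odinger formula.
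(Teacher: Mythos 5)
Your proof is correct and follows essentially the same route as the paper's: establish trace-norm convergence of the two series, use closedness of $H$ on the partial sums to identify $H\rho$ with the first sum, identify $\overline{\rho H}$ with its adjoint via a density argument, and then invoke Theorem~\ref{Davies}. The only cosmetic differences are that you verify the Cauchy property of $HS_n\psi$ vector-wise where the paper uses trace-norm convergence of the truncated operators $a_n=H\rho_n$, and you spell out the verification of Davies' conditions (i) and (ii) slightly more explicitly than the paper does.
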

\begin{proof}
Let $\rho_n:=\sum_{k=1}^n p_k\proj{e_k}$, and define 
$a_n:=\sum_{k=1}^n p_k\diad{He_k}{e_k}$,  $b_n:=a_n^\ast = \sum_{k=1}^n p_k\diad{e_k}{He_k}$ for all $n\in\N$. Note that by assumption, $e_k\in\dom(H)$ for all $k$, hence $a_n$ and $b_n$ are well-defined, and $a_n=H\rho_n$ and 
$b_n \psi=\rho_n H\psi,\,\psi\in\dom(H)$. Since $\bignorm{\diad{\psi}{\phi}}_1=\norm{\psi}\,\norm{\phi},\,\psi,\phi\in\hil$, we have 
\begin{equation*}
\sum_k \bignorm{p_k \diad{He_k}{e_k}}_1=\sum_k \bignorm{p_k \diad{e_k}{He_k}}_1=
\sum_k p_k \norm{He_k}<+\infty\,,
\end{equation*}
and thus the operators $a:=\sum_k p_k\diad{He_k}{e_k}$ and $b:=\sum_k p_k\diad{e_k}{He_k}$ are well-defined in $\tc(\hil)$, and, moreover, $\lim_n\norm{a-a_n}_1=\lim_n\norm{b-b_n}_1=0$. 
Taking the adjoint is a continuous operation with respect to the trace-norm, and hence, $b = \lim_n b_n = \lim_n a^\ast_n = a^\ast$. 

Let $\psi \in \hil$. 
By assumption, $\rho \psi$ is in $\dom(H)$ and, since $\rho_n$ converges to $\rho$ in trace-norm, $\rho_n \psi$ converges to $\rho\psi$. 
Moreover $H\rho_n \psi = a_n \psi$ converges to $a\psi$ by the above argument. 
Closedness of $H$ then yields that $H\rho \psi  = a\psi$.
Since this holds for all $\psi \in \hil$, we finally conclude that $
H\rho = a$. 
Similarly, for every $\psi\in\dom(H)$ we have $\norm{(\rho H-b)\psi}=\lim_n\norm{(\rho H-b_n)\psi}=\lim_n\norm{(\rho-\rho_n)H\psi}=0$, and therefore 
the restriction of $b$ onto $\dom(H)$ coincides with $\rho H$, from which $\overline{\rho H}=b$. 
The last assertion follows from Theorem \ref{Davies}.
\end{proof}
 
\textit{Proof of Proposition \ref{prop:variance}:}\s 
Since $\Tr\left(P\proj{\rho^{1/2}\psi}\right)=\inner{\psi}{\rho^{1/2}P\rho^{1/2}\psi}\le \Tr\left(\rho^{1/2}P\rho^{1/2}\right)=
\Tr \left(P\rho\right)$ for any unit vector $\psi\in\hil$ and projection $P$, we get
\begin{equation*}
\int_{\R}\lambda^2\,\d\Tr \left( E^H(\lambda)\proj{\rho^{1/2}\psi}\right)\le
\int_{\R}\lambda^2\,\d\Tr \left( E^H(\lambda)\rho\right)<+\infty\,,
\end{equation*}
and hence $\rho^{1/2}\psi\in\dom(H)$ for any unit vector $\psi\in\hil$, which in turn yields $\ran(\rho)\subset\dom(H)$. Similarly,
\begin{equation*}
\sum_k p_k\norm{He_k}^2= \sum_k p_k\int_{\R}\lambda^2\,\d\Tr \bz E^H(\lambda)\proj{e_k}\jz=\int_{\R}\lambda^2\,\d\Tr \bz E^H(\lambda)\rho \jz<+\infty\,,
\end{equation*}
and, by the Schwarz inequality,
\begin{equation*}
\sum_k p_k\norm{He_k}\le 
\bz\sum_k p_k \jz^{1/2}\bz\sum_k p_k \norm{He_k}^2\jz^{1/2} =
\bz\sum_k p_k\norm{He_k}^2\jz^{1/2}<+\infty.
\end{equation*}
By Lemma \ref{lemma:vNS}, the statement follows.\qed
\medskip

As is well-known (cf.~\cite{ref:Arai}), the Schr\"odinger equation holds for a Hamiltonian $H$ and an initial vector state $\proj{\psi}$ if and only if $\psi$ is in the domain of $H$, that is, if the second moment of $H$ is finite in the state $\proj{\psi}$. Proposition \ref{prop:variance} is therefore a natural generalization of this fact for mixed initial states. Note, however, that Lemma \ref{lemma:vNS} and Proposition \ref{prop:variance} only 
provide sufficient conditions for the von Neumann-Schr\"odinger equation to hold.
Consider, for instance, a faithful state $\rho$ with eigen-decomposition 
$\rho={6\over \pi^2}\sum_{k=1}^{\infty} \frac{1}{k^2}\proj{e_k}$ and define the Hamiltonian to be $H:=\rho^{-1/2}$. 
One can easily see that $\ran(\rho)\subset\dom(H)$ and $\comm{H}{\rho}$ is closable with $\overline{\comm{H}{\rho}}=0$, 
hence, by Theorem \ref{Davies}, the von Neumann equation holds. 
On the other hand, ${6\over \pi^2}\sum_{k=1}^{\infty}\frac{1}{k^2}\norm{H e_k}={\sqrt{6}\over \pi}\sum_{k=1}^{\infty}\frac{1}{k}=+\infty$.

\section{The dynamics of the reduced purity}\label{sec:DofRP}

Consider now a system $S$ coupled to some environment $E$, with a joint Hilbert space $\hil=\hil_S\otimes\hil_E$. For a state $\rho$ of the total system $S+E$, we denote by $\rho_S$ and $\rho_E$ its reductions to $\hil_S$ and $\hil_E$, respectively. We say that $\rho$ has no correlations if it is a product state, i.e., $\rho = \rho_S \otimes \rho_E$. 
Otherwise, we that say $\rho$ has non-zero correlations. 
It is easy to see that $\rho$ has no correlations if and only if any observables $A$ on $S$ and $B$ on $E$ are statistically independent with respect to $\rho$.

As it was shown in \cite{ref:KOH}, if the total system evolves according to a bounded Hamiltonian $H$, with a decomposition $H=H_S\otimes I_E+I_S\otimes H_E+H_{int}$, then
\begin{equation}\label{eq:QE}
\abs{\frac{d}{dt}P_S(t)}\le 4\sqrt{2} \,\norm{H_{int}}\,\sqrt{I(\rho(t))}\,,
\end{equation}
where $I(\rho(t)):=S\bz\rho_{S}(t) \jz+S\bz \rho_{E}(t) \jz-S\bz\rho(t)\jz$ is the \ki{mutual information} between the systems $S$ and $E$ in the state $\rho(t)$, and $P_S(t) := \Tr(\rho_{S}(t)^2)$ is the purity of the state $\rho(t)$. While for unbounded Hamiltonians one cannot expect such a bound to hold, the time-derivative of the reduced purity still reveals some information on the correlations contained in $\rho(t)$. 
Namely, if the derivative of the reduced purity is non-zero at some time $t$ then there are necessarily some correlations between the system and its environment. 
To show this, we start with the following:

\begin{Lem}\label{lem:pc}
If the von Neumann-Schr\"odinger equation holds then the reduced purity $P_S(t)$ is time-differentiable at any time, and 
$$
\frac{d}{dt} P_S(t)\big|_{t=t_0} =-2i \tr_{SE}\left(\rho_{S}(t_0)\otimes \I_E \overline{[H,\rho(t_0)]}\right).  
$$
\end{Lem}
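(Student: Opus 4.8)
The plan is to reduce the computation to the elementary Leibniz rule for a differentiable curve in the Banach algebra $\tc(\hil_S)$, once the differentiability of the reduced state has been transferred from that of the total state. The starting point is the identity $P_S(t)=\tr_S\!\bz\rho_S(t)^2\jz=\tr_{SE}\!\bz(\rho_S(t)\otimes\I_E)\,\rho(t)\jz$, which follows from $\tr_E\!\bz(\rho_S(t)\otimes\I_E)\,\rho(t)\jz=\rho_S(t)\,\tr_E\bz\rho(t)\jz=\rho_S(t)^2$. More generally I would use that $\tr_S\!\bz A\,\tr_E(X)\jz=\tr_{SE}\!\bz(A\otimes\I_E)X\jz$ for any bounded $A$ on $\hil_S$ and trace-class $X$ on $\hil$; this is exactly what is needed to recognize the final expression.

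First I would show that $t\mapsto\rho_S(t)$ is trace-norm differentiable with $\frac{d}{dt}\rho_S(t)=\tr_E\!\bz\frac{d}{dt}\rho(t)\jz$. This rests on the fact that the partial trace $\tr_E:\tc(\hil)\to\tc(\hil_S)$ is a linear contraction for the trace norm, $\norm{\tr_E(X)}_1\le\norm{X}_1$. Applying $\tr_E$ to the difference quotient of $\rho(t)$ and using that the von Neumann--Schr\"odinger equation \eqref{vNS} guarantees $\lim_{t\to t_0}\bignorm{\frac{\rho(t)-\rho(t_0)}{t-t_0}+i\,\overline{\comm{H}{\rho(t_0)}}}_1=0$, the contraction estimate yields the corresponding limit for the reduced state, so that $\frac{d}{dt}\rho_S(t_0)=-i\,\tr_E\!\bz\overline{\comm{H}{\rho(t_0)}}\jz$.

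Next I would differentiate the purity. Since the multiplication $\tc(\hil_S)\times\tc(\hil_S)\to\tc(\hil_S)$ is a bounded bilinear map ($\norm{AB}_1\le\norm{A}_1\norm{B}_1$) and $t\mapsto\rho_S(t)$ is trace-norm differentiable, the curve $t\mapsto\rho_S(t)^2$ is trace-norm differentiable with the Leibniz rule $\frac{d}{dt}\rho_S(t)^2=\frac{d}{dt}\rho_S(t)\cdot\rho_S(t)+\rho_S(t)\cdot\frac{d}{dt}\rho_S(t)$. As the trace is a bounded linear functional on $\tc(\hil_S)$ it commutes with the derivative, and cyclicity gives $\frac{d}{dt}P_S(t)=2\,\tr_S\!\bz\rho_S(t)\,\frac{d}{dt}\rho_S(t)\jz$. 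Substituting the formula for $\frac{d}{dt}\rho_S(t)$ from the previous step and re-expressing $\tr_S\!\bz\rho_S(t)\,\tr_E(X)\jz=\tr_{SE}\!\bz(\rho_S(t)\otimes\I_E)X\jz$ with $X=\overline{\comm{H}{\rho(t)}}$ produces the claimed identity; note that the final trace is well defined because $\rho_S(t)\otimes\I_E$ is bounded and $\overline{\comm{H}{\rho(t)}}$ is trace-class by hypothesis.

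The main obstacle is purely the infinite-dimensional bookkeeping: making sure every manipulation takes place at the level of trace-norm-convergent quantities rather than formal operator products, since $H$ is unbounded and $\rho_S(t)\otimes\I_E$ is not trace-class. The two facts that make this painless are the contractivity of the partial trace and the boundedness of the multiplication on $\tc(\hil_S)$; together they let the whole argument run as if in finite dimensions, with the closure $\overline{\comm{H}{\rho(t)}}$ carrying all the unboundedness and entering only through the already-justified von Neumann--Schr\"odinger equation.
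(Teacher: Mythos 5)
Your proof is correct, and it reaches the stated formula by a slightly different organization than the paper. The paper estimates the difference quotient of $P_S$ directly: it splits
$\frac{P_S(t_0+t)-P_S(t_0)}{t}+2i\tr_S\bz\rho_S(t_0)\tr_E\overline{[H,\rho(t_0)]}\jz$
into three terms and bounds each using $|\tr AB|\le\norm{A}\,\norm{B}_1$ together with the trace-norm continuity of the partial trace and the strong continuity of $\U_t$; differentiability of $t\mapsto\rho_S(t)$ is never isolated as a separate statement. You instead modularize the argument: contractivity of $\tr_E$ for the trace norm transfers the von Neumann--Schr\"odinger limit to the reduced state, giving $\frac{d}{dt}\rho_S(t_0)=-i\tr_E\overline{[H,\rho(t_0)]}$ in $\tc(\hil_S)$, and then the Leibniz rule for the bounded multiplication on $\tc(\hil_S)$ plus cyclicity yields $\frac{d}{dt}P_S=2\tr_S\bz\rho_S\,\frac{d}{dt}\rho_S\jz$, which the duality identity $\tr_S\bz A\,\tr_E X\jz=\tr_{SE}\bz(A\otimes\I_E)X\jz$ converts into the claimed expression. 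The two arguments rest on the same analytic inputs, and the paper's three-term decomposition is essentially your product rule unrolled by hand; what your version buys is a cleaner separation of the two ingredients (differentiability descends through contractions; squaring is differentiable in a Banach algebra), at the cost of invoking the abstract Leibniz rule rather than exhibiting the explicit estimate. Both are complete proofs.
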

\begin{proof} Since $||\rho_{S}(t)\otimes \I_E|| \le 1$ and $\overline{[H,\rho(t_0)]}$ is a trace-class operator by Theorem \ref{Davies}, it is easy to check that
\begin{align}\label{al}
& \left|\frac{P_S(t_0 + t)-P_S(t_0)}{t} + 2i \tr_S\left( \rho_{S}(t_0) \tr_E \overline{[H,\rho(t_0)]}\right) \right| \nonumber \\
&\ds\ds\ds\ds\ds\le 
\left| \frac{\tr_S\left( \left(\rho_{S}(t_0 + t) - \rho_{S}(t_0)\right)\rho_{S}(t_0 + t) \right)}{t }+i \tr_S\left( \rho_{S}(t_0 + t) \tr_E \overline{[H,\rho(t_0)]}\right) \right| \nonumber  \\
&\ds\ds\ds\ds\ds\ds\ds+
\left| {\tr_S\left(\rho_{S}(t_0) (\rho_{S}(t_0 + t) - \rho_{S}(t_0))\right)\over t} +i \tr_S\left( \rho_{S}(t_0) \tr_E \overline{[H,\rho(t_0)]}\right) \right| \nonumber  \\
&\ds\ds\ds\ds\ds\ds\ds+
\left| {\tr_S\left( (\rho_{S}(t_0 + t) - \rho_{S}(t_0)) \tr_E \overline{[H,\rho(t_0)]}\right)}\right| \nonumber  \\
&\ds\ds\ds\ds\ds\le
 2 \cnorm{ \frac{ \rho (t_0 + t)-\rho(t_0)}{t} + i\overline{[H,\rho(t_0)]}}_1
+
\cnorm{ {\rho_{S}( t_0 + t) - \rho_{S}( t_0) }} \cdot
\cnorm{  \tr_E  \overline{[H,\rho(t_0)]}}_1.
\end{align}
In the second step we used that $|\tr A B| \le ||A|| ||B||_1$ holds for any bounded operator $A$ and trace class operator $B$, and that the partial trace operation is continuous with respect to trace norm.

Thanks to Theorem \ref{Davies} and the strong continuity of $\U_t$, both terms in \eqref{al} go to $0$ as $t$ goes to $0$. 
\end{proof}
Note that under the conditions of Lemma \ref{lemma:vNS}, one has $\overline{H\rho-\rho H}=H\rho-\bz H\rho\jz^*$, and hence, by Lemma \ref{lem:pc},
\begin{equation*} 
\abs{\frac{d}{dt} P_S(t)\big|_{t=t_0}}\le
4 \norm{\rho_S(t_0)}\,\norm{H\rho(t_0)}_1
\le 4 \sum_k p_k \norm{H e_k}\,,
\end{equation*}
where $\rho = \sum_{k=1}^\infty p_k \ketbra{e_k}{e_k}$ is an eigen-decomposition. 
If, moreover, $V[H]_{\rho}<+\infty$ then the conditions of Lemma \ref{lemma:vNS} hold, 
and the above can further be upper bounded by $4 m_{2,\rho}(H)^{1/2}$ as it was shown in the proof of Proposition \ref{prop:variance}. 
Hence, we obtain an upper bound on the change of the reduced purity in terms of the expectation value of the square of the Hamiltonian: 
\begin{Prop}\label{prop:UB} If $V[H]_{\rho} < \infty \ (\Leftrightarrow m_{2,\rho}(H) < \infty)$, we have 
\begin{equation}\label{eq:QEmoment} 
\abs{\frac{d}{dt} P_S(t)\big|_{t=t_0}}\le 
4 m_{2,\rho}(H)^{1/2}. 
\end{equation}
\end{Prop}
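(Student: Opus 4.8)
The plan is to read off the derivative of the reduced purity from Lemma \ref{lem:pc} and then reduce the claimed inequality to a short chain of trace-norm estimates that have essentially already been assembled in Section \ref{sec:DE}. Concretely, Lemma \ref{lem:pc} gives
\[
\frac{d}{dt} P_S(t)\big|_{t=t_0} = -2i\,\tr_{SE}\!\left(\rho_S(t_0)\otimes \I_E\,\overline{[H,\rho(t_0)]}\right).
\]
Before estimating the right-hand side, I would first record that the hypotheses needed at time $t_0$ are in force: since $m_{2,\rho(t_0)}(H)=m_{2,\rho}(H)<+\infty$ by the spectral invariance $m_{k,\rho}(H)=m_{k,\rho(t)}(H)$ noted earlier, Proposition \ref{prop:variance} applies to $\rho(t_0)$, so the von Neumann-Schr\"odinger equation holds and the conditions of Lemma \ref{lemma:vNS} are satisfied. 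This is what lets me write $\overline{[H,\rho(t_0)]}=H\rho(t_0)-\bz H\rho(t_0)\jz^*$ and use the eigen-decomposition form of $H\rho(t_0)$.

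Next I would apply the elementary bound $\abs{\tr AB}\le\norm{A}\,\norm{B}_1$ together with $\norm{\rho_S(t_0)\otimes \I_E}=\norm{\rho_S(t_0)}$ and $\norm{A^*}_1=\norm{A}_1$ to get
\[
\abs{\frac{d}{dt} P_S(t)\big|_{t=t_0}}\le 2\,\norm{\rho_S(t_0)}\,\norm{\overline{[H,\rho(t_0)]}}_1\le 4\,\norm{\rho_S(t_0)}\,\norm{H\rho(t_0)}_1 .
\]
Because $\rho_S(t_0)$ is a density operator its largest eigenvalue is at most $1$, so $\norm{\rho_S(t_0)}\le 1$ and the leading factor disappears, leaving $4\,\norm{H\rho(t_0)}_1$ to be controlled.

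For the final step I would write $\rho(t_0)=\sum_k p_k\proj{f_k}$ with $f_k=U(t_0)e_k$, so that Lemma \ref{lemma:vNS} yields $H\rho(t_0)=\sum_k p_k\diad{Hf_k}{f_k}$ and hence, by the triangle inequality in trace norm, $\norm{H\rho(t_0)}_1\le\sum_k p_k\norm{Hf_k}$. A Cauchy-Schwarz step identical to the one in the proof of Proposition \ref{prop:variance} then gives
\[
\sum_k p_k\norm{Hf_k}\le\Big(\sum_k p_k\Big)^{1/2}\Big(\sum_k p_k\norm{Hf_k}^2\Big)^{1/2}=m_{2,\rho(t_0)}(H)^{1/2}=m_{2,\rho}(H)^{1/2},
\]
using $\sum_k p_k=1$ and the fact that $\sum_k p_k\norm{Hf_k}^2$ is exactly the second moment in the state $\rho(t_0)$, which equals $m_{2,\rho}(H)$ by invariance. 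Chaining the displays produces the stated bound. I do not expect a genuine obstacle here; the only point needing care is the bookkeeping at the shifted time $t_0$, namely ensuring via the spectral invariance of the moments that both Lemma \ref{lemma:vNS} and the Schwarz estimate may legitimately be applied to $\rho(t_0)$ rather than to $\rho$.
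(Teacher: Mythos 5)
Your argument is correct and follows essentially the same route as the paper: Lemma \ref{lem:pc} for the derivative, the bound $\abs{\tr AB}\le\norm{A}\norm{B}_1$ with $\norm{\rho_S(t_0)}\le 1$ and $\overline{[H,\rho(t_0)]}=H\rho(t_0)-(H\rho(t_0))^*$, then the trace-norm series bound from Lemma \ref{lemma:vNS} and the Schwarz estimate from the proof of Proposition \ref{prop:variance}. Your explicit bookkeeping with $f_k=U(t_0)e_k$ and the unitary invariance of the moments is a slightly more careful rendering of a step the paper leaves implicit, but it is not a different argument.
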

\bigskip 

Now we are in a position to give our main result: 
\begin{Thm}\label{Thm:main} 
Assume that the total system $S+E$ evolves according to the Hamiltonian $H$, with an initial state $\rho_0$, such that 
$V[H]_{\rho_0} < \infty $. If there is no correlation between the system and its environment at some time $t_0$ then 
the reduced purity has a flat derivative at that moment, i.e.,
\begin{equation*}
\rho(t_0)=\rho_{S}( t_0) \otimes \rho_{E} (t_0) \ds\imp\ds \frac{d}{dt} P_S(t)\big|_{t=t_0} = 0. 
\end{equation*}
\end{Thm}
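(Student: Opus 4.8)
The plan is to use Lemma \ref{lem:pc} to turn the statement into the vanishing of a single trace, and then to exploit the fact that the \emph{product} structure of $\rho(t_0)$ is exactly what makes the bounded operator $\rho_S(t_0)\otimes\I_E$ commute with the total state; once this commutation is in hand the trace collapses term-by-term.

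First I would fix $t_0$ and abbreviate $\sigma:=\rho(t_0)=\sigma_S\otimes\sigma_E$ and $A:=\sigma_S\otimes\I_E$, noting that $A$ is bounded and self-adjoint with $\norm{A}\le 1$. Since the variance is conserved along the evolution, $V[H]_{\sigma}=V[H]_{\rho_0}<\infty$, so Proposition \ref{prop:variance} guarantees that the von Neumann--Schr\"odinger equation holds and Lemma \ref{lem:pc} applies, giving $\frac{d}{dt}P_S(t)\big|_{t=t_0}=-2i\,\tr\big(A\,\overline{\comm{H}{\sigma}}\big)$. The entire statement thus reduces to proving $\tr\big(A\,\overline{\comm{H}{\sigma}}\big)=0$.

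The key observation is that, because $\sigma$ is a product state, $A\sigma=\sigma_S^2\otimes\sigma_E=\sigma A$, i.e.\ $A$ and $\sigma$ commute. I would then pass to a common eigenbasis. Both operators are self-adjoint and $\sigma$ is trace-class, hence compact with finite-dimensional eigenspaces for each nonzero eigenvalue; since $A$ commutes with $\sigma$ it commutes with all spectral projections of $\sigma$, so it preserves each eigenspace $\ker(\sigma-p_k)$ and can be diagonalized within it. This yields an orthonormal eigenbasis $\{e_k\}$ of $\ran(\sigma)$ with $\sigma e_k=p_k e_k$ ($p_k>0$) and $Ae_k=\alpha_k e_k$, $\alpha_k\in\R$.

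Next I would apply Lemma \ref{lemma:vNS} to \emph{this} eigendecomposition, which is legitimate because $\sum_k p_k\norm{He_k}<\infty$ follows from $m_{2,\sigma}(H)<\infty$ exactly as in the proof of Proposition \ref{prop:variance} and is independent of the chosen basis. This gives $\overline{\comm{H}{\sigma}}=\sum_k p_k\big(\diad{He_k}{e_k}-\diad{e_k}{He_k}\big)$. Using $\tr\big(A\diad{\phi}{\psi}\big)=\inner{\psi}{A\phi}$ together with $Ae_k=\alpha_k e_k$, the $k$th term contributes $p_k\alpha_k\big(\inner{e_k}{He_k}-\inner{He_k}{e_k}\big)=2i\,p_k\alpha_k\,\mathrm{Im}\,\inner{e_k}{He_k}$, which vanishes because $\inner{e_k}{He_k}$ is real for the self-adjoint $H$ and $e_k\in\dom(H)$. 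Summing over $k$ gives $\tr\big(A\,\overline{\comm{H}{\sigma}}\big)=0$, hence the flat derivative.

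The conceptual heart, and the only genuinely nontrivial point, is recognizing that ``no correlation'' is equivalent here to the commutation $\comm{A}{\sigma}=0$, which is precisely what lets each summand vanish via the reality of $\inner{e_k}{He_k}$. The main technical obstacle is the interchange of the unbounded $H$ with the trace and the choice of a basis that is simultaneously adapted to $A$ and valid for Lemma \ref{lemma:vNS}; I would therefore route everything through the explicit trace-class representation of $\overline{\comm{H}{\sigma}}$ rather than attempt a naive cyclicity manipulation, which would run into domain problems with $H$.
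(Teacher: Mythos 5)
Your proof is correct and follows essentially the same route as the paper: both reduce the claim via Proposition \ref{prop:variance} and Lemma \ref{lem:pc} to showing $\tr\bigl((\rho_S(t_0)\otimes\I_E)\,\overline{\comm{H}{\rho(t_0)}}\bigr)=0$, expand $\overline{\comm{H}{\rho(t_0)}}$ through the trace-class representation of Lemma \ref{lemma:vNS}, and kill each rank-one term by the reality of $\inner{e_k}{He_k}$ for the self-adjoint $H$. The only difference is one of packaging: the paper works directly in the product eigenbasis $\{\psi_k\otimes\phi_l\}$, which is automatically a joint eigenbasis of $\rho(t_0)$ and $\rho_S(t_0)\otimes\I_E$ (the latter acting with eigenvalue $p_k$), whereas you extract a joint eigenbasis abstractly from the commutation $\comm{A}{\sigma}=0$ --- a mild generalization showing that $\tr\bigl(A\,\overline{\comm{H}{\sigma}}\bigr)=0$ for any bounded self-adjoint $A$ commuting with $\sigma$.
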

\begin{proof}
From Proposition \ref{prop:variance} and its proof,
the assumption of finite variance implies the assumptions of 
Lemmas \ref{lemma:vNS} and \ref{lem:pc}.
Assume that $\rho(t_0)=\rho_{S}( t_0) \otimes \rho_{E}( t_0)$ at some $t_0$.  
To simplify notation, let $\rho:=\rho(t_0)$, $\rho_S := \rho_{S}( t_0)$ and $\rho_E := \rho_{E}( t_0)$. 
Let $\rho_S=\sum_k p_k\proj{\psi_k}$ and $\rho_E=\sum_l q_l\proj{\phi_l}$ be eigen-decompositions of $\rho_S$ and $\rho_E$, respectively, with all $p_k,q_l>0$. Then,
$\rho=\sum_{k,l}p_kq_l\proj{\psi_k\otimes\phi_l}$ is an eigen-decomposition of $\rho$,
and $\psi_k\otimes\phi_l\in\dom(H)$ for all $k,l$, by the assumption of finite variance.
Lemmas \ref{lemma:vNS} and \ref{lem:pc} give
\begin{eqnarray*}
\frac{d}{dt} P_S(t)\big|_{t=t_0} &=&-2i \tr_{SE}\left(\bz\rho_S\otimes \I_E\jz \overline{[H,\rho]}\right)\\
&=&-2i\Tr_{SE}\left(
\bz\rho_S\otimes \I_E\jz \sum_{k,l} \bz \diad{H\psi_k\otimes\phi_l}{\psi_k\otimes\phi_l}-\diad{\psi_k\otimes\phi_l}{H\psi_k\otimes\phi_l}\jz\right)\\
&=&-2i\sum_{k,l} \Tr_{SE}\bz\diad{H\psi_k\otimes\phi_l}{\bz\rho_S\psi_k\jz\otimes\phi_l} - \diad{\bz\rho_S\psi_k\jz\otimes\phi_l}{H\psi_k\otimes\phi_l}\jz\\\
&=&-2i\sum_{k,l} p_k\bz\inner{\psi_k\otimes\phi_l}{H\psi_k\otimes\phi_l}-\inner{H\psi_k\otimes\phi_l}{\psi_k\otimes\phi_l}\jz\\
&=&0\,.
\end{eqnarray*}
In the third step we used that for a fixed bounded operator $A$, the map $B\mapsto \Tr AB$ is continuous on $\tc(\hil)$ with respect to the trace-norm.
\end{proof}

\section{Concluding Remarks}\label{sec:C}

In this short note, we have investigated Statement (B) in quantum mechanical systems with arbitrary Hamiltonians described by possibly unbounded self-adjoint operators.  
Theorem \ref{Thm:main} implies that the property of correlation in Statement (B) is universal if one accepts the finiteness of the variance of the total energy. 
Statement (B) has several suggestions on the general property of correlations: 

First, the contraposition of this reveals the role of correlation in the mechanism of purity changes. Namely, we {\it need not only an interaction but also correlations} to change the reduced purity. 

Second, Statement (B) can be used as a method to detect correlations from local information: 
Namely, if one finds that a system's purity has a non-zero time-derivative at a certain time instance then one can conclude that the system has non-zero correlations with its environment at that time instance. Unlike in the case of bounded Hamiltonians,it seems difficult to obtain a quantitative estimation such as \eqref{eq:QE}. 
Instead, we have shown an upper bound in Proposition \ref{prop:UB} on the time-derivative of the reduced purity in terms of the second moment of the total Hamiltonian. 
It would be interesting to find a quantitative version of Statement (B), such that the time-derivative of the purity is bounded by a certain quantity including both the magnitudes of correlations and the second moment (or the variance) of the total Hamiltonian. 

Third, Statement (B) reveals a physical difference between the {\it proper mixture} and the {\it improper mixture}. 
In \cite{ref:D}, two origins of mixture of quantum states are conceptually distinguished.   
A mixture of a state is said to be {\it proper} if the origin of the mixture is due to the absence of knowledge; 
for instance, if a state of a system $S$ is prepared in one of the pure states $\{\psi_i\}_i$ with a prior probability $p_i$, then we describe the state by the mixed density operator $\rho_S = \sum_i p_i \ketbra{\psi_i}{\psi_i}$. (Note that, in ``reality", the state is one of the pure state $\psi_i$, but it is reasonable in a statistical sense to represent the state by the mixed density operator $\rho_S$ if one does not have knowledge about which state was prepared.) 
On the other hand, based on Statement (A), there exists another origin of mixture due to correlations with environment $E$:  
Even if a state of the total system $S+E$ is in a pure state $\psi$, the reduced state is a mixed state $\tilde{\rho}_S = \tr_E \ketbra{\psi}{\psi}$ if $\psi$ has correlations between $S$ and $E$. 
A mixture of this kind, i.e., with the origin due to correlations, is said to be {\it improper}. 
As a general agreement, however, both states $\rho_S$ and $\tilde{\rho}_S$ are physically identified if they are described by the same density operator, and thus the difference between properness and improperness of mixture is just of conceptual one, and there does not apper any physical difference between them.  
On the other hand, from Statement (B), if one finds a non-zero time-derivative of the purity of the system $S$, then there must exist correlations with its environment. Therefore, in such case, one can confirm that the mixture of the state is due to correlations, and thus is improper. 
Indeed, since a proper mixed state is not originate from correlations, and thus we can assume that the total state is of the form $\rho_{tot} = \rho_S\otimes \rho_E$ for a proper mixed state $\rho_S$. 
Therefore, a time-derivative of the purity of the system $S$ cannot be non-zero in a proper case. 
In this sense, there could be a physical diffrence between proper and improper mixed states. 
Note that, however, a mixed state would be in general a hybrid of proper and improper mixtures.

Finally, let us conclude our paper with a perspective of a measure for (quantum) correlations using local information. 
When the state of the total system is pure, the reduced entropy $H(\rho_S):=-\Tr_S (\rho_S\log\rho_S)$ is known to be a good measure of correlations between the system and its environment \cite{ref:EM}. This does not hold, however, if the total system is in a mixed state. 
For instance, if $\dim\hil_S=\dim\hil_E=d<+\infty$ then the maximally entangled pure state and the maximally mixed state both have reduced entropy equal to $\log d$. One can consider various other local information quantities to capture correlations, like the R\'enyi entropies $S_\alpha(\rho_S):=\frac{1}{1-\alpha}\log\Tr\rho_S^{\alpha},\,\alpha>0,\alpha\ne 1$, or the purity
$P_S(\rho_S)=\Tr\rho_S^2$,
which is the non-logarithmic version of the $2$-R\'enyi entropy. 
These quantities approximate the von Neumann entropy, and hence give useful measures of correlations when the total state is pure, but suffer from the same problem in the mixed state case. 
In spite of this, our result shows that local dynamical quantities can still give useful information on the amount of correlations, even if the total system may be in a mixed state. 
In this line, we will further investigate the role of the dynamical information of the reduced purity as a measure of correlations in the near future.      

\bigskip

{\bf Acknowledgement}
\bigskip

Most of this work was done when the authors belonged to Tohoku University in 2007, and we are grateful to Prof. M.~Ozawa, Prof. F.~Hiai, and Dr. T.~Miyamoto for their helpful comments and advices. 
G.K. would like to acknowledge Dr. K. Imafuku for his useful comments on the application of Statement (B) to a difference between proper and improper mixtures. 
Financial supports by the JSPS research grants (G.~K. and H.~O.), the Grant-in-Aid for JSPS fellows $18 \cdot 06916$ (M.M.), and the Hungarian Research Grant OTKA  T068258 (M.M.) are gratefully acknowledged. 
The Centre for Quantum Technologies is funded by the Singapore Ministry of Education and the National Research Foundation as part of the Research Centres of Excellence program.

\end{document}